\makeatletter
\def\input@path{{styles/}}
\makeatother

\documentclass[12pt]{article}
\def\UseBibLatex{1}

\providecommand{\BibLatexMode}[1]{}
\providecommand{\BibTexMode}[1]{}

\ifx\UseBibLatex\undefined%
  \renewcommand{\BibLatexMode}[1]{}
  \renewcommand{\BibTexMode}[1]{#1}
\else
  \renewcommand{\BibLatexMode}[1]{#1}
  \renewcommand{\BibTexMode}[1]{}
\fi

\BibLatexMode{%
   \usepackage[bibencoding=utf8,style=alphabetic,backend=biber]{biblatex}%
   \usepackage{sariel_biblatex}%
}

\usepackage{amsmath}%
\usepackage{amssymb}%
\usepackage[table]{xcolor}%
\usepackage{mathcalb}

\usepackage[amsmath,thmmarks]{ntheorem}%
\usepackage{titlesec}%
\usepackage{xcolor}%
\usepackage{mleftright}%
\usepackage{xspace}%
\usepackage{graphicx}
\usepackage{microtype}
\usepackage[inline]{enumitem}

\usepackage{hyperref}%
\usepackage[ocgcolorlinks]{ocgx2}

\hypersetup{%
      unicode,
      breaklinks,%
      colorlinks=true,%
      urlcolor=[rgb]{0.25,0.0,0.0},%
      linkcolor=[rgb]{0.5,0.0,0.0},%
      citecolor=[rgb]{0,0.2,0.445},%
      filecolor=[rgb]{0,0,0.4},
      anchorcolor=[rgb]={0.0,0.1,0.2}%
}

\titlelabel{\thetitle. }%

\theoremseparator{.}%

\theoremstyle{plain}%
\newtheorem{theorem}{Theorem}[section]

\newtheorem{lemma}[theorem]{Lemma}

\theoremstyle{plain}%
\theoremheaderfont{\sf} \theorembodyfont{\upshape}%
\newtheorem*{remark:unnumbered}[theorem]{Remark}%
\newtheorem{remark}[theorem]{Remark}%

\theoremheaderfont{\em}%
\theorembodyfont{\upshape}%
\theoremstyle{nonumberplain}%
\theoremseparator{}%
\theoremsymbol{\myqedsymbol}%
\newtheorem{proof}{Proof:}%

\providecommand{\emphind}[1]{}%
\renewcommand{\emphind}[1]{\emph{#1}\index{#1}}

\definecolor{blue25emph}{rgb}{0, 0, 11}

\providecommand{\emphic}[2]{}
\renewcommand{\emphic}[2]{\textcolor{blue25emph}{%
      \textbf{\emph{#1}}}\index{#2}}

\providecommand{\emphi}[1]{}%
\renewcommand{\emphi}[1]{\emphic{#1}{#1}}

\definecolor{almostblack}{rgb}{0, 0, 0.3}

\providecommand{\emphw}[1]{}%
\renewcommand{\emphw}[1]{{\textcolor{almostblack}{\emph{#1}}}}%

\providecommand{\emphOnly}[1]{}%
\renewcommand{\emphOnly}[1]{\emph{\textcolor{blue25emph}{\textbf{#1}}}}

\newcommand{\myqedsymbol}{\rule{2mm}{2mm}}
\newcommand{\SarielThanks}[1]{%
   \thanks{%
      School of Computing and Data Science; %
      University of Illinois; %
      201 N. Goodwin Avenue; %
      Urbana, IL, 61801, USA; %
      \href{mailto:spam@illinois.edu}{sariel@illinois.edu}; %
      \url{http://sarielhp.org/}.%
   #1%
   }%
}

\newcommand{\HLink}[2]{\hyperref[#2]{#1~\ref*{#2}}}
\newcommand{\HLinkY}[2]{\hyperref[#2]{#1}}
\newcommand{\HLinkSuffix}[3]{\hyperref[#2]{#1\ref*{#2}{#3}}}

\newcommand{\figlab}[1]{\label{fig:#1}}
\newcommand{\figref}[1]{\HLink{Figure}{fig:#1}}

\newcommand{\thmlab}[1]{{\label{theo:#1}}}

\newcommand{\thmrefY}[2]{\HLinkY{#2}{theo:#1}}

\newcommand{\lemlab}[1]{\label{lemma:#1}}
\newcommand{\lemref}[1]{\HLink{Lemma}{lemma:#1}}%

\providecommand{\eqlab}[1]{}%
\renewcommand{\eqlab}[1]{\label{equation:#1}}
\newcommand{\Eqref}[1]{\HLinkSuffix{Eq.~(}{equation:#1}{)}}

\providecommand{\remove}[1]{}%

\newcommand{\pth}[1]{\mleft(#1\mright)}%

\newcommand{\ProbC}{{\mathbb{P}}}
\newcommand{\ExC}{{\mathbb{E}}}

\newcommand{\ExCond}[2]{\ExC\!\left[%
       #1 \;\middle\vert\; #2 \right]}

\newcommand{\Prob}[1]{\ProbC\mleft[ #1 \mright]}
\newcommand{\Ex}[1]{\ExC\mleft[ #1 \mright]}

\newlist{compactenumA}{enumerate}{5}%
\setlist[compactenumA]{topsep=0pt,itemsep=-1ex,partopsep=1ex,parsep=1ex,%
   label=(\Alph*)}%

\newlist{compactenuma}{enumerate}{5}%
\setlist[compactenuma]{topsep=0pt,itemsep=-1ex,partopsep=1ex,parsep=1ex,%
   label=(\alph*)}%

\newlist{compactenumI}{enumerate}{5}%
\setlist[compactenumI]{topsep=0pt,itemsep=-1ex,partopsep=1ex,parsep=1ex,%
   label=(\Roman*)}%

\newlist{compactenumi}{enumerate}{5}%
\setlist[compactenumi]{topsep=0pt,itemsep=-1ex,partopsep=1ex,parsep=1ex,%
   label=(\roman*)}%

\newlist{compactitem}{itemize}{5}%
\setlist[compactitem]{topsep=0pt,itemsep=-1ex,partopsep=1ex,parsep=1ex,%
   label=\ensuremath{\bullet}}%

\numberwithin{figure}{section}%
\numberwithin{table}{section}%
\numberwithin{equation}{section}%

\newcommand{\ts}{\hspace{0.6pt}}

\newcommand{\dT}{\mathcalb{d}_T}%
\newcommand{\dTY}[2]{\dT\pth{#1,#2}}%
\newcommand{\Vor}{\mathcal{V}}
\newcommand{\VorX}[1]{\Vor\pth{#1}}%
\newcommand{\areaX}[1]{\mathrm{area}\pth{#1}}%
\newcommand{\Hd}{\mathcal{H}}%
\newcommand{\disk}{D}

\providecommand{\etal}{et~al.\xspace}
\renewcommand{\etal}{et~al.\xspace}
\usepackage{scalerel}
\newcommand{\tldO}{\scalerel*{\widetilde{O}}{j^2}}%

\newcommand{\seclab}[1]{\label{sec:#1}}
\newcommand{\secref}[1]{\HLink{Section}{sec:#1}}

\usepackage{qed_duck}

\bibliography{voronoi_prophet}

\begin{document}

\title{The Prophet and the Voronoi Diagram}

\author{Sariel Har-Peled\SarielThanks{Work on this paper was partially
      supported by NSF AF award CCF-2317241.  }}

\date{\today}

\maketitle

\begin{abstract}
    Consider a stream of $n$ random points (say, from the unit square) arriving one by one, where a player has to make an irreversible immediate decision for each arriving point whether to pick it. The player has to pick a single point, and the payoff is the area of the cell of the picked point, in the final Voronoi diagram of \emph{all} the points. We show that there is a simple strategy so that with probability $\geq 1 - \tldO(1/\sqrt{n})$, the player's payoff is only a constant factor smaller than the optimal choice (i.e., the one made by the prophet). This competitiveness is somewhat surprising, as this payoff is larger by a factor of $\Theta( \log n)$ than the average payoff.
\end{abstract}

\section{Introduction}

\paragraph*{The game.}

Imagine that a player is given a sequence of points
$P = \{p_1,\ldots, p_n\}$ one by one in the unit square $[0,1]^2$ (the
player knows the value of $n$ in advance). Whenever the player gets a
point, they can decide to use it as their desired site. The player can
choose only one such point during the game. The payoff is the area of the
Voronoi cell of the point picked in the final diagram formed by
\emph{all} the points of $P$. The key restriction is that the player has
to decide immediately upon seeing a point $p_i$ whether to pick it or
not---if the player decides not to pick a point, it can never be used.

A natural question is to devise a good strategy so that at the end of the
game, the expected payoff for the player is (up to a constant) the same
as the all-knowing (some would say all-annoying) prophet%
\footnote{Mark Twain: ``Prophecy is a good line of business, but it is
   full of risks.''}, who picks the site associated with the cell of
maximum area at the end of the game.

The setting is somewhat similar to the \emph{Secretary
   Problem}\footnote{Bad name---the secretary is rarely the problem.}
in optimal stopping theory. Imagine interviewing a sequence of candidates
for a position.  One interviews them sequentially (assigning each one a
score evaluating their abilities), say in random order. The interviewer
must make an irreversible, immediate decision whether to hire a candidate
once their interview is done. The target is to maximize the probability
of hiring (what seems to be) the best candidate. The classical strategy
is to interview a $\tfrac{1}{e}$ fraction of the candidates, and then
hire the first candidate that is better than any seen in the prefix. This
strategy has a probability of $\tfrac{1}{e}$ of hiring the best
candidate. There are numerous variants of this problem, with some of them
being useful for online auctions of ads in search results. For a nice
survey of the history of the problem, see Ferguson
\cite{f-wssp-89}\footnote{Of note, is Section 5 in \cite{f-wssp-89},
   describing how Kepler used a similar algorithm trying to decide who to
   merry as his second wife.}.  Coming up with such guarantees usually
involves proving a ``prophet inequality'' bounding the performance of the
suggested strategy versus the ``prophet''.

The difference between our game and the secretary problem is that there
is later interference in the quality of the choice the player
made---later points might shrink the chosen point's cell to be
(relatively) small. A better metaphor might be land speculation---a piece
of land bought early might be useless at the end of the game, if somebody
builds an obnoxious facility next to it \cite{cd-rofl-22}.  Thus, our
problem can be interpreted as modeling a land speculator's dilemma---should
they invest now, or wait for a better opportunity later on?

We are not aware of any work on prophet inequality with a model where the value of the final payoff might deteriorate after the decision is made.

\paragraph*{The Voronoi game.}
A related game, known as the Voronoi game, deals with the task of inserting points into a Voronoi diagram and trying to maximize the total area of the inserted points. For example, Cheong \etal \cite{chlm-orvg-04} showed that in an existing Voronoi diagram of the unit square of $n$ points, one can insert $n$ new points, and the new cells would have total area at least $1/2+c$, where $c >0$ is some absolute constant. This game is quite hard to analyze, but there are several known results, see \cite{accgo-cflvg-04}.

\paragraph*{The Voronoi game when a player has multiple choices.}

(This is not directly related to our work, but as the known results are
quite interesting, we discuss them here shortly.)  A game-theoretic
variant of the problem was studied by Boppana \etal
\cite{bhmm-vcg-16}. Here, each player is given a set of $m$ possible
locations, and each player can choose a location out of their available
locations, under the above payoff model. They show that even in 2d there
is no pure Nash equilibrium for $3$ players, where each player has two
possible locations (e.g., Rock-Paper-Scissors does not have a pure
equilibrium, so this situation is quite common).

As a reminder, a \emph{pure} equilibrium implies that each player has a unique choice of a point, such that in the resulting solution, no player can improve their payoff by changing their choice. The existence of a regular Nash equilibrium only guarantees that there is a stable mixed strategy (i.e., each player has a distribution over their choices) for each player.

Boppana \etal show that deciding if there exists a \emph{pure} Nash equilibrium is NP-hard for $m=4$ and one dimension (here, each player is assigned the Voronoi interval counterclockwise to its chosen location on the circle). Strangely, if the payoff is the length of the Voronoi cell (on the circle) for a player, then a pure equilibrium always exists.  To alleviate that, Boppana \etal study the game where the locations of the points provided for each player are chosen randomly, where they study the number of pure Nash equilibria (all in one dimension on the circle).

\paragraph*{Stochastic geometry.}

There is a vast amount of work on the geometric properties of Voronoi diagrams and similar geometric structures when the input points are picked randomly \cite{s-iig-53, s-iggp-04, obsc-stcav-00, cskm-sga-13}.

\paragraph{Our contribution (small as it might be).}

We introduce the game, as described above, and show that there is a
constant-factor competitive strategy. Informally, the idea is to wait
till roughly all but the last $O(\sqrt{n})$ points in the sequence had
been seen, and then pick the largest cell encountered in this final
suffix. The proof of correctness relies on showing that there are many
large cells, of similar size to the largest one, and one of them must pop
up in this suffix, and has low probability of being destroyed later on.

To this end, we show that many grid cells would have a single point in
them, despite each grid cell being ``large''. Specifically, the area of
each grid cell is $\Theta(\frac{\log n}{n})$.  A constant fraction of
these points are going to have large cells at the end of the game.  The
former follows from elementary computations, see \secref{balls_and_bins},
where we (re)prove some observations about balls into bins, and the
distribution of bins that have only a single point in them.

\secref{area} continues our bad habit of reproving known things,
providing several observations about the distribution and area of the
large and largest cells in a random Voronoi diagram.  We emphasize,
however, that none of these observations seems to be new.  Nevertheless,
for completeness, we provide self-contained proofs of them.  Our proofs
are elementary and (arguably) elegant, avoiding the heavier machinery
usually used in proving such things. The new proofs might be of
independent interest. In particular, we bound the size of the largest
cell in a random Voronoi diagram (\lemref{biggest_cell}), and prove that
many cells are almost as large (\lemref{many_large}). The latter is a
surprisingly easy implication of the (known) bounds on balls and bins
proved earlier.

Equipped with these tools, the player's strategy and analysis are provided in \secref{strategy}.

\paragraph*{Paper organization.}
In \secref{prelims}, we cover some preliminaries, including some
background on martingales and balls into bins.  In \secref{area}, we
prove the bounds needed on the distribution of the area of the largest
cells in a random Voronoi diagram.  In \secref{strategy}, we deploy these
tools to present and analyze the player's strategy that guarantees that
it is $O(1)$-competitive with probability close to one. Some concluding
remarks are given in \secref{conclusions}.

\section{Preliminaries}
\seclab{prelims}

All $\log$s in the paper are natural logs\footnote{All non-natural logs
   are considered to be an abomination, and would be actively avoided.}.

\paragraph*{Torus topology.}
In the following, consider the unit cube $\Hd_d = [0,1]^d$ under the torus topology.  In one dimension, the distance between two points $p_1,q_1 \in [0,1]$ is $\dTY{p_1}{q_1} = \min( |p_1 - q_1|, 1 - |p_1 - q_1|)$.  More generally, the \emphi{torus distance} between two points $p, q \in \Hd_d$ is
\begin{equation*}
    \dTY{p}{q} = \sqrt{ \sum_{i=1}^d \dTY{p_i}{q_i}^2 }.
\end{equation*}

\begin{remark}[Voronoi diagrams and torus topology]
    In the following, we deal with sets of points $P \subseteq [0,1]^2$, and
    their associated Voronoi diagram restricted to $[0,1]^2$, where one
    uses the torus distance as defined above (i.e., torus topology) to
    compute the diagram. To avoid annoying repetition, we emphasize that
    all Voronoi diagrams, in the following, are computed and considered
    under these settings.
\end{remark}

\subsection{A bit on balls into bins}
\seclab{balls_and_bins}

Imagine throwing $n$ balls into $m = \alpha n$ bins, for $\alpha = 4/\log n$, uniformly at random. For simplicity of exposition, assume that $n = t^4$, for some integer $t$. The probability that a specific bin has $k$ balls at the end of this process is
\begin{equation*}
    \tau_k = \binom{n}{k} p^k (1-p)^{n-k},
\end{equation*}
where $p = 1/m = 1/\alpha n$.

Let $X_k = f_k(b_1, \ldots, b_n)$ be the number of bins containing
exactly $k$ balls, if $b_1, \ldots, b_n$ are the locations of the $n$
balls thrown.  Observe that $\Ex{X_k} = m \tau_k$.  A standard way to
think about such processes is as an \emphw{exposure martingale}. That is,
let $Z_i = \ExCond{f_k}{b_1, \ldots b_i}$ (i.e., we expose the location
of the balls, one by one), for $i=0,\ldots, n$. Clearly, $Z_n = X_k$,
while $Z_0 = \Ex{X_k}$. It is not hard to verify that $f_k(\cdots)$ is
$2$-Lipschitz---moving a single ball from one bin to another changes the
status of at most two bins. That implies that $|Z_i - Z_{i-1}| \leq
2$. The sequence $Z_0, \ldots, Z_n$ is a martingale.

\begin{theorem}[Azuma's Inequality]
    \thmlab{Azuma}%
    Let $ Z_0, \ldots, Z_n$ be a martingale with $Z_0 =\mu$, and
    \begin{math}
        |Z_{i+1}-Z_i| \leq c_i,
    \end{math}
    for $i=0, \ldots, n-1$. Then, for any $\lambda > 0$, we have
    \begin{math}
        \Bigg.  \Prob{|Z_n- \mu| > \lambda \sqrt{n} \ts} <
        2\exp\pth{\Bigl.-\lambda^2/(2\sum_i c_i^2)}.
    \end{math}
\end{theorem}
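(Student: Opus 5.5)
We prove Azuma's inequality by the standard exponential-moment (Chernoff) method, applied to the martingale differences. Write $Y_i = Z_{i} - Z_{i-1}$ for $i=1,\ldots,n$, so that $Z_n - \mu = \sum_{i=1}^n Y_i$, where $\ExCond{Y_i}{Z_0,\ldots,Z_{i-1}} = 0$ and $|Y_i| \leq c_{i-1}$. Set $t = \lambda\sqrt{n}$ and $C = \sum_i c_i^2$. By symmetry (the sequence $-Z_0,\ldots,-Z_n$ is also a martingale with the same increment bounds), it suffices to show $\Prob{Z_n - \mu > t} \leq \exp\pth{-t^2/(2C)}$ and then take a union bound over the two tails, which yields the factor of $2$.

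The key step is Hoeffding's lemma for a single bounded, mean-zero increment: if $\Ex{Y}=0$ and $|Y| \leq c$, then $\Ex{e^{sY}} \leq e^{s^2c^2/2}$ for all $s>0$. To prove it, use convexity of $y \mapsto e^{sy}$ on $[-c,c]$ to write $e^{sy} \leq \frac{c-y}{2c}e^{-sc} + \frac{c+y}{2c}e^{sc}$. Taking expectations and using $\Ex{Y}=0$ gives $\Ex{e^{sY}} \leq \cosh(sc)$. Comparing Taylor series termwise, $(2k)! \geq 2^k k!$, shows $\cosh(x) \leq e^{x^2/2}$. The same argument applies verbatim conditionally on $Z_0,\ldots,Z_{i-1}$.

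Next, peel off the increments one at a time using the tower property:
\begin{equation*}
    \Ex{e^{s(Z_n-\mu)}} = \Ex{ e^{s(Z_{n-1}-\mu)} \ExCond{e^{sY_n}}{Z_0,\ldots,Z_{n-1}} } \leq e^{s^2 c_{n-1}^2/2}\, \Ex{e^{s(Z_{n-1}-\mu)}}.
\end{equation*}
By induction this gives $\Ex{e^{s(Z_n-\mu)}} \leq e^{s^2 C/2}$. Markov's inequality then yields $\Prob{Z_n-\mu > t} \leq e^{-st + s^2C/2}$, and choosing $s = t/C$ gives $\exp(-t^2/(2C))$.

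Substituting $t=\lambda\sqrt n$ gives the bound $\exp\pth{-\lambda^2 n/(2C)}$, which is at least as strong as the claimed $\exp\pth{-\lambda^2/(2C)}$ since $n\geq 1$. It is also strict as required, since for finite $C$ the Chernoff bound is attained with equality only in degenerate cases; alternatively, one may simply accept the bound with $\leq$, or note that the stated form follows from the stronger one. The only genuinely delicate step is Hoeffding's lemma, specifically the convexity bound together with the inequality $\cosh x\le e^{x^2/2}$. Everything else is routine conditioning plus Markov's inequality.
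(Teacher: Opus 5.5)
The paper gives no proof of this statement; Azuma's inequality is quoted as a standard tool. So there is no argument to compare against. Your proof is the textbook exponential-moment argument: Hoeffding's lemma via convexity and $\cosh x\le e^{x^2/2}$, peeling the increments with the tower property, Markov's inequality, and the choice $s=t/C$. It is correct.

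Two remarks are worth adding.

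First, your observation that you actually obtain $2\exp\pth{-\lambda^2 n/(2\sum_i c_i^2)}$ matters more than a side comment suggests. This stronger form is exactly what the paper uses right afterwards. There $c_i=2$ and $\sum_i c_i^2=4n$, so it yields the claimed $\exp(-\lambda^2/8)$, up to the factor $2$. The statement as literally written would only give $2\exp(-\lambda^2/(8n))$, which is vacuous for the paper's choice $\lambda=n^{1/8}$. You should present the stronger inequality as your conclusion and derive the stated one from it, not the other way around.

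Second, your justification of the strict inequality is vague. The relevant condition is not that $C$ is finite but that $C=\sum_i c_i^2>0$; if all $c_i=0$, the right-hand side is $0$ or undefined and strictness fails. Given $C>0$, there is a clean argument: for $n\ge 2$, strictness is immediate from $e^{-\lambda^2 n/(2C)}<e^{-\lambda^2/(2C)}$. For any $n$, the Markov step is itself strict. Indeed, $W=e^{s(Z_n-\mu)}>0$, so equality in $\Prob{W\ge e^{st}}\le e^{-st}\Ex{W}$ would force $Z_n-\mu\ge t>0$ almost surely. That contradicts $\Ex{Z_n-\mu}=0$.
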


\subsubsection{Proving concentration of empty and singleton bins}

The following are pretty standard estimates---we include the detailed
proofs for the sake of completeness\footnote{This was also added to
   ensure this writeup is not too short. A reader finding this manuscript
   still too short should read between the lines.}.

\begin{lemma}
    Consider throwing $n$ balls into $m = \alpha n$ bins, for $\alpha = \beta/\log n$, where $\beta \geq 4$ is a constant and $n \geq c_0$, where $c_0$ is some \emph{sufficiently} large constant.  Let $X_0$ be the number of empty bins. We have $X_0 = \Theta( n^{1-1/\beta}/ \log n)$ with probability $\geq 1 - n^{-c}$, where $c \geq 8$ is a fixed arbitrary large constant.
\end{lemma}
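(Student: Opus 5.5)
The plan is to compute $\Ex{X_0} = m\tau_0$ precisely and then use the exposure martingale together with Azuma's inequality (\thmrefY{Azuma}{Azuma's Inequality}) to show that $X_0$ concentrates around this mean.

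\textbf{The expectation.} With $p = 1/m = \log n/(\beta n)$ we have
\begin{equation*}
    \tau_0 = (1-p)^n .
\end{equation*}
Using $e^{-p-p^2} \leq 1-p \leq e^{-p}$, valid for $p \leq 1/2$, this gives
\begin{equation*}
    \tau_0 = \exp\pth{-\tfrac{\log n}{\beta} - O\pth{\tfrac{\log^2 n}{n}}} = (1-o(1))\, n^{-1/\beta}.
\end{equation*}
Hence
\begin{equation*}
    \Ex{X_0} = m\tau_0 = \Theta\pth{\tfrac{\beta n}{\log n}\, n^{-1/\beta}} = \Theta\pth{ n^{1-1/\beta}/\log n},
\end{equation*}
since $\beta$ is a constant. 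We need $n \geq c_0$ so that the $1-o(1)$ factor is at least, say, $1/2$.

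\textbf{Concentration.} As observed above, $f_0$ is $2$-Lipschitz: moving one ball changes the emptiness status of at most two bins. So the exposure martingale $Z_0,\ldots,Z_n$ has $c_i = 2$ and $\sum_i c_i^2 = 4n$. Azuma with deviation $\lambda\sqrt{n}$ gives failure probability $2\exp(-\lambda^2/(8n))$. To make this at most $n^{-c}$ we take $\lambda \sqrt{n} = t$ with $t = \sqrt{8(c+1) n\log n}$, so $X_0$ lies within $O(\sqrt{n\log n})$ of its mean with probability $\geq 1-n^{-c}$.

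\textbf{The main obstacle.} We must check that this deviation is $o(\Ex{X_0})$, i.e. that
\begin{equation*}
    \sqrt{n\log n} \ll n^{1-1/\beta}/\log n .
\end{equation*}
This is exactly where $\beta \geq 4$ enters. The right side is at least $n^{3/4}/\log n$, and $n^{3/4}/\log n \geq 4\sqrt{8(c+1)}\sqrt{n\log n}$ once $n^{1/4} \geq C(\log n)^{3/2}$, which holds for $n \geq c_0$ with $c_0$ depending on $c$. This is why $c_0$ must be ``sufficiently'' large relative to the fixed $c$. Thus with probability $\geq 1-n^{-c}$ we get $X_0 \in [\Ex{X_0}/2, 3\Ex{X_0}/2]$, which is $\Theta(n^{1-1/\beta}/\log n)$.

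The remaining care is just bookkeeping: Azuma as stated is phrased with $\lambda\sqrt n$, so we set $\lambda = \sqrt{8(c+1)\log n}$. Then $2\exp(-\lambda^2/(8))$, with the normalization matching the $\sqrt n$ scaling in the statement, is $2n^{-(c+1)} \leq n^{-c}$. Whichever normalization of Azuma is intended, the deviation is $O(\sqrt{n\log n})$.
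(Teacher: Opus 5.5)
Your proposal is correct and is essentially the paper's proof: you bound $\Ex{X_0}$ from both sides via $e^{-x-x^2}\le 1-x\le e^{-x}$, then apply Azuma to the $2$-Lipschitz exposure martingale. The only differences are that the paper takes $\lambda=n^{1/8}$ (deviation $n^{5/8}$, failure probability about $e^{-n^{1/4}/8}$) instead of your $\lambda=\Theta(\sqrt{\log n})$, and that you make explicit the check, left implicit in the paper, that the deviation is $o(n^{3/4}/\log n)$ because $\beta\ge 4$.

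One presentational fix: for deviation $\lambda\sqrt{n}$ with $c_i=2$, the correct Azuma bound is $2\exp(-\lambda^2/8)$, equivalently $2\exp(-t^2/(8n))$ for deviation $t$. Your concentration paragraph instead writes $2\exp(-\lambda^2/(8n))$ for deviation $\lambda\sqrt{n}$, which would be useless for $\lambda=\Theta(\sqrt{\log n})$. Your final paragraph does use the right form, so this is a wording issue rather than a gap.
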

\begin{proof}
    For $k=0$, using $1-x \leq e^{-x} $, we have
    \begin{align*}
        \Ex{X_0}
        &=
        \alpha n \pth{1 -\frac{1}{\alpha n}}^n
        \leq
        \alpha n \exp \pth{ -\frac{1}{\alpha n}}^n
        =
        \alpha n \exp \pth{ -\frac{1}{\alpha }}
        \\&%
        =%
        \alpha n \exp \pth{ -\frac{\log n}{\beta }}
        =
        \beta \frac{n^{1-1/\beta}}{\log n}.
    \end{align*}
    Using $1-x \geq \exp(-x -x^2)$, we have
    \begin{equation*}
        \Ex{X_0}
        \geq
        \alpha n \exp \pth{ -\frac{1}{\alpha n} - \frac{1}{\alpha^2 n^2}}^n
        =
        \alpha n \exp \pth{ -\frac{1}{\alpha }} \exp\pth{ -\frac{1}{\alpha^2 n}}
        \geq
        \frac{\beta n^{1-1/\beta}}{2\log n},
    \end{equation*}
    for $n$ sufficiently large.  As the value of $X_0$ forms an exposure martingale, by \thmrefY{Azuma}{Azuma}'s inequality, we have that $\Prob{|X_0 - \Ex{X_0}| \geq \lambda \sqrt{n}} \leq \exp( -\lambda^2 /8)$. The claim now follows, by taking (say) $\lambda = n^{1/8}$.
\end{proof}

\begin{remark}
    Somewhat confusingly, in the above lemma, we throw $n$ balls into
    significantly fewer bins (i.e., fewer by a logarithmic factor). It is
    easy to verify that there are no empty bins, with high probability,
    if the number of bins is (say) $\beta \tfrac{n}{\log n}$ if
    $\beta=1/10$. The lemma states that if $\beta \geq 4$, then the number of empty
    bins is polynomially large (and with high probability).  So the exact
    value of $\beta$ matters, as there is a phase transition in the interval
    $\beta \in [1/10, 4]$.
\end{remark}

\begin{lemma}
    \lemlab{one_point}%
    For the number $X_1$ of bins containing exactly one ball, we have $X_1 \geq n^{1-1/\beta}/ (4 \log n)$ with high probability (i.e., $X_1 =\Theta( n^{1-1/\beta}/ \log n)$ with high probability).
\end{lemma}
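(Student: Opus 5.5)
The plan mirrors the proof of the previous lemma: first estimate $\Ex{X_1} = m\tau_1$ from both sides, then use the exposure martingale together with \thmrefY{Azuma}{Azuma}'s inequality to show that $X_1$ concentrates around its mean.

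\emph{Expectation.} We have
\begin{equation*}
    \Ex{X_1} = m \cdot n p (1-p)^{n-1} = n (1-p)^{n-1}, \qquad p = 1/(\alpha n).
\end{equation*}
For the upper bound, use $1-x \leq e^{-x}$ to get $(1-p)^{n-1} \leq e^{-(n-1)/(\alpha n)} = e^{1/(\alpha n)}\, n^{-1/\beta}$, which is at most $2n^{-1/\beta}$ for $n$ large. Hence $\Ex{X_1} \leq 2 n^{1-1/\beta}$.

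For the lower bound, use $1-x \geq \exp(-x-x^2)$ to get $(1-p)^{n-1} \geq \exp(-1/\alpha)\exp(-1/(\alpha^2 n)) \geq \tfrac12 n^{-1/\beta}$. Hence $\Ex{X_1} \geq n^{1-1/\beta}/2$.

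This expectation is larger than the stated bound $n^{1-1/\beta}/(4\log n)$ by a $\log n$ factor, because $X_1 \approx n \cdot n^{-1/\beta}$ while $X_0 \approx m\, n^{-1/\beta}$. So the lower bound in the lemma follows with a lot of room. The two-sided claim $X_1 = \Theta(n^{1-1/\beta}/\log n)$ as literally written seems off by this $\log$ factor. I will prove the stated one-sided bound $X_1 \geq n^{1-1/\beta}/(4\log n)$, and note that the concentration argument actually gives $X_1 = \Theta(n^{1-1/\beta})$.

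\emph{Concentration.} As noted before \thmrefY{Azuma}{Azuma}, $f_1$ is $2$-Lipschitz, so $Z_i = \ExCond{f_1}{b_1,\ldots,b_i}$ is a martingale with increments bounded by $2$. Azuma's inequality then gives
\begin{equation*}
    \Prob{|X_1 - \Ex{X_1}| \geq \lambda\sqrt{n}} \leq 2\exp(-\lambda^2/8).
\end{equation*}
Take $\lambda = n^{1/8}$, so the failure probability is $2\exp(-n^{1/4}/8) \leq n^{-c}$ for $n \geq c_0$. On the good event,
\begin{equation*}
    X_1 \geq \tfrac12 n^{1-1/\beta} - n^{5/8}.
\end{equation*}

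\emph{Main obstacle.} The one step needing care is checking that the deviation $n^{5/8}$ is negligible compared with the mean. Since $\beta \geq 4$, we have $1-1/\beta \geq 3/4 > 5/8$. Therefore $X_1 \geq \tfrac14 n^{1-1/\beta} \geq n^{1-1/\beta}/(4\log n)$ for $n$ sufficiently large, which proves the lemma.

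For the matching upper bound, the same event gives $X_1 \leq 2n^{1-1/\beta} + n^{5/8} = O(n^{1-1/\beta})$. Thus $X_1 = \Theta(n^{1-1/\beta})$ with high probability, which is at least as strong as what the statement needs.
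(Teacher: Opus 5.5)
Your proof is correct and follows the paper's route: bound $\Ex{X_1} = n(1-p)^{n-1}$ between $n^{1-1/\beta}/2$ and $2n^{1-1/\beta}$ using $e^{-x-x^2} \le 1-x \le e^{-x}$, then apply Azuma to the $2$-Lipschitz exposure martingale, and you usefully spell out the step the paper leaves implicit (take $\lambda = n^{1/8}$ and use $5/8 < 3/4 \le 1-1/\beta$). Your remark about the parenthetical is also right: the paper's own expectation computation gives $X_1 = \Theta(n^{1-1/\beta})$, so the claimed $\Theta(n^{1-1/\beta}/\log n)$ is off by a $\log n$ factor, though only the weaker lower bound is used later in \lemref{many_large}.
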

\begin{proof}
    Arguing as above, as $\alpha = \beta/\log n$, we have
    \begin{align*}
        \Ex{X_1}
        &=
        \alpha n \cdot n \frac{1}{\alpha n}
        \Bigl(1 -\frac{1}{\alpha n} \Bigr)^{n-1}
        \leq
        2n \exp \Bigl( -\frac{1}{\alpha n}\Bigr)^{n}
        =
        2n \exp \Bigl( -\frac{\log n}{\beta} \Bigr)
        \\&
        =
        2n^{1-1/\beta}.
    \end{align*}
    Similarly, we have
    \begin{math}
        \Ex{X_1} = n \pth{1 -\frac{1}{\alpha n}}^{n-1} \geq n \exp \pth{ -\frac{1}{\alpha } - \frac{1}{\alpha^2 n }} \geq \frac{n^{1-1/\beta}}{2}.
    \end{math}
    Observing that the value of $X_1$ can be evaluated as an exposure martingale implies the result by Azuma's inequality.
\end{proof}

\section{Distribution cell areas in a Voronoi diagram}
\seclab{area}

\subsection{The largest Voronoi cell is not that large}

\begin{lemma}
    \lemlab{biggest_cell}%
    Consider throwing $n$ points $P = \{ p_1, \ldots, p_n \}$, uniformly and independently, into $\Hd = [0,1]^2$, and let $\Vor = \VorX{P}$ be the Voronoi diagram of the points of $P$. Let $A$ be the area of the largest Voronoi cell in $\Vor$. Then, for any constant $c >1$, we have
    \begin{math}
        \Prob{ \smash{A \geq 4c \frac{\log n}{n}} } \leq \frac{1}{n^{2c-4}}.
    \end{math}
\end{lemma}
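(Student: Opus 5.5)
We prove the bound by a grid-covering argument combined with a union bound. The idea is that a large Voronoi cell forces a large empty region, and a large empty region must contain some grid cell entirely.

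First, the geometric step. Suppose $p_i$ has a Voronoi cell $C$ with $\areaX{C} \geq a$, where $a = 4c\log n/n$. The cell $C$ is star-shaped around $p_i$; in fact it is convex in the torus metric, for cells of diameter less than $1/2$. Let $r$ be the largest distance from $p_i$ to a point $q \in C$. Then the open disk of radius $\dTY{p_i}{q}$ around $q$ contains no point of $P$, since $q$ is at least that far from every site. We have $C \subseteq \disk(p_i, r)$, so $\pi r^2 \geq a$.

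The empty disk is centered at $q$ and has radius $r \geq \sqrt{a/\pi}$. This radius is only a constant factor larger than what we need.

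Second, we replace the continuum of possible empty disks by a finite family. Partition $\Hd$ into a grid of squares with side length $s$, chosen so that any disk of radius $r$ fully contains at least one grid square. It suffices to take $s \leq r/\sqrt{2}$ (up to a constant), so each square has area roughly $r^2/2 \approx a/(2\pi)$.

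This loses a constant factor $2\pi$ in the exponent, whereas the stated bound has exponent $2c-4$ against area $4c\log n/n$. The constants must therefore be tuned more carefully.

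A cleaner route is to use an inscribed square of the empty disk directly. A disk of radius $r$ contains an axis-parallel square of area $2r^2 \geq 2a/\pi$. Covering by a grid of side $s$, the disk contains a full grid square of side about $(\sqrt2 r - s)$, or equivalently a square from a finite family of shifted grids.

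Concretely, I would pick $s = \sqrt{\log n / n}$-scale grid squares whose area is $\geq 2c\log n/n$. There are at most $n/(2c\log n) \leq n$ such squares (with shifts, at most $n^{O(1)}$ candidates). Each candidate is empty with probability
\begin{equation*}
(1-\text{area})^n \leq e^{-2c\log n} = n^{-2c}.
\end{equation*}

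Third, the union bound. We take the union over at most $n^4$ candidate squares: grid positions on a fine grid of resolution $1/n^{2}$ give $n^4$ squares. This yields probability at most $n^4 \cdot n^{-2c} = n^{-(2c-4)}$, which matches the statement. The $n^{4}$ term in the bound strongly suggests exactly this count.

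So the plan is to discretize candidate empty squares with corners on a $1/n^2$ grid. We show that a cell of area $\geq 4c\log n/n$ implies an empty axis-parallel square of area $\geq 2c\log n/n$ (after rounding inward, losing a negligible $O(1/n^2)$ in side length) with corners on this grid. Then we apply the union bound.

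The main obstacle is the geometric step: showing that a cell of area $A$ yields an empty square of area about $A/2$. The crude disk argument gives only area $2A/\pi$, and this is where the constant $4$ versus $2$ must be reconciled. I would try using that $C \subseteq \disk(p_i,r)$ gives $A \leq \pi r^2$, while the empty disk around $q$ has radius $r$ and contains a square of area $2r^2 \geq 2A/\pi \geq 2A/4 = A/2$, since $\pi < 4$. That works: with $A \ge 4c\log n/n$, the square area is $\geq 2c\log n /n$.

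The remaining care points are torus wrap-around and the requirement that $r < 1/2$. Both hold trivially unless $A$ is of constant size, and that case is covered by the same emptiness bound.
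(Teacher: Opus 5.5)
Your proof is correct and follows the paper's argument: the point $q$ of the cell farthest from its site is the center of an empty torus disk of radius $r \geq \sqrt{A/\pi}$; you snap this to a grid of $n^4$ positions at resolution $1/n^2$, and a union bound gives $n^4 \cdot n^{-2c}$. The only difference is cosmetic: the paper shrinks the disk to radius $\tfrac{7}{8} r$ centered at a grid point, while you use a square of fixed side $\sqrt{2c\log n/n}$ with a grid corner, placed inside the inscribed square of area $2r^2 \geq 2A/\pi$, where the slack $2/\pi > 1/2$ absorbs the rounding. State explicitly that this side length is fixed, since that is what makes the candidate count exactly $n^4$.
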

\begin{proof}
    Let $p$ be a point in $P$, and assume it has a cell $C$ in $\Vor$ of area $\gamma$. Let $u$ be the furthest point from $p$ in $C$, with $r = \dTY{p}{u}$. Observe that $\areaX{C} \leq \pi r^2$, implying that $r = \dTY{p}{u} \geq \sqrt{\areaX{C}/\pi}$.

    Let $G$ be a set of $n^{4}$ points formed by a uniform grid in
    $\Hd$, such that the distance of any point in $\Hd$ to the
    nearest point in $G$ is at most $\sqrt{2}/n^2$. Observe that if there
    is a disk of radius $r$, for $r > 1/n$, in $\Hd$ containing no points
    of $P$, then there is such a disk of radius
    $r - \sqrt{2}/n^2 \geq R = r(1-1/8) $ centered at a point of $G$.

    If the area of $C$ is at least $A = 4c \frac{\log n}{n}$, then $r \geq \sqrt{A/ \pi} \geq \sqrt{c \tfrac{\log n}{n}}$, and $R \geq \tfrac{7}{8} \sqrt{c \tfrac{\log n}{n}}$.  By the union bound, the probability of having a disk of radius $R$ centered at a point of $G$ that is empty is at most
    \begin{equation*}
        n^4 \pth{ 1 - \pi R^2}^n
        \leq
        n^4 \exp\pth{ -\pi R^2 n }
        \leq
        n^4 \exp\pth{ -\pi \frac{49 c \log n} {64n} \cdot n }
        \leq
        n^{4-2c}.
    \end{equation*}
\end{proof}

\subsection{There are many large Voronoi cells}

\subsubsection{A Voronoi cell of a point vs the boundary of a square}

In the following, let $C$ be a square in the plane. For a point $p \in C$, let $D_p$ be the locus of points closer to $p$ than any point on the boundary of $C$. Consider the function $f(p ) = \areaX{D_p} / \areaX{C}$.

\begin{lemma}
    \lemlab{center}%
    For a random point $p \in C$, we have $\Prob{ f(p) \geq 1 / 15} \geq 1/15$.
\end{lemma}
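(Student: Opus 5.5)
Since $f$ is invariant under scaling and translation, I will assume without loss of generality that $C = [-1,1]^2$, which has area $4$. The idea is to find a central sub-square of constant relative area on which $f$ is bounded below by an explicit constant. Fix $\delta \in (0,1)$ and let $Q_\delta = [-\delta,\delta]^2$ be the central square. A uniformly random point of $C$ lands in $Q_\delta$ with probability $\delta^2$. The proof then reduces to showing that $f(p) \geq 1/15$ for every $p \in Q_\delta$, with $\delta^2 \geq 1/15$. The choice $\delta = 1/3$ gives $\delta^2 = 1/9 \geq 1/15$, so I will aim for that.

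For the pointwise bound, fix $p \in Q_\delta$. Then the distance from $p$ to $\partial C$ is at least $\rho = 1 - \delta$. The region $D_p$ consists of the points $x$ with $\|x - p\| < d(x, \partial C)$. I will exhibit an explicit disk inside $D_p$.

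Take $x$ with $\|x - p\| \leq s$. Then $d(x, \partial C) \geq d(p,\partial C) - s \geq \rho - s$. Hence $x \in D_p$ whenever $s < \rho - s$, that is, whenever $s < \rho/2$. So $D_p$ contains the open disk of radius $\rho/2$ around $p$, and
\[
\areaX{D_p} \geq \pi \rho^2/4 .
\]
With $\delta = 1/3$ we have $\rho = 2/3$, so $\areaX{D_p} \geq \pi/9$. This gives
\[
f(p) \geq \frac{\pi/9}{4} = \frac{\pi}{36} \approx 0.087 > \frac{1}{15} \approx 0.0667 .
\]
Combining the two parts, $\Prob{f(p) \geq 1/15} \geq \Prob{p \in Q_{1/3}} = 1/9 \geq 1/15$.

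The only real obstacle is the triangle-inequality step, which must correctly handle the distance to the boundary of the square. The quantity $d(x,\partial C)$ is $1$-Lipschitz in $x$. Also, for $x$ inside $C$ it equals the minimum of the four side distances, and that minimum is at least $d(p,\partial C) - \|x-p\|$. Points $x$ outside $C$ need no care, since the disk of radius $\rho/2$ around $p$ lies entirely inside $C$. Once this is in place, the rest is the arithmetic above. That arithmetic has slack: if a cleaner statement were wanted, any $\delta$ with $\delta^2 \geq 1/15$ and $\pi(1-\delta)^2/16 \geq 1/15$ works.
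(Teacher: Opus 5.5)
Your proof is correct and follows essentially the same strategy as the paper: restrict $p$ to a central sub-square, then use a distance argument to show that $D_p$ contains a region of constant relative area. The one difference is which region you certify. The paper takes side length $\ell/(1+2\sqrt{2})$ so that the sub-square $C'$ itself lies in $D_p$, which makes both constants equal to $1/(1+2\sqrt{2})^2$. You instead use the disk of radius $(1-\delta)/2$ about $p$; this separates the probability from the area fraction and gives slightly better constants, $1/9$ and $\pi/36$.
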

\begin{proof}
    Let $\ell$ be the side length of $C$, and let $C'$ be the square of
    length $\delta = \ell / ( 1 + 2 \sqrt{2})$ centered at the center of
    $C$, and assume that both $C$ and $C'$ are axis-aligned. We require
    that the minimum distance of any point in $C'$ to $\partial C$ is at
    least $\sqrt{2}\delta$. Namely, we have that
    $\ell = 2 \sqrt{2} \delta + \delta$, see \figref{p:in:square}---this
    is how the value of $\delta$ was chosen.
    \begin{figure}[h]
        \centerline{\includegraphics{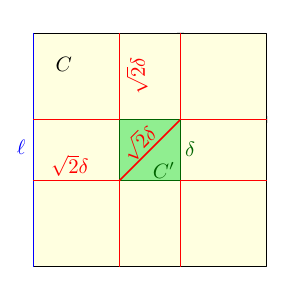}}
        \caption{A point inside $C'$ contains it in its Voronoi cell.}
        \figlab{p:in:square}%
    \end{figure}%
    Thus, the probability that a random point $p$ falls into $C'$ is $1 / (1 + 2\sqrt{2})^2 \geq 1/15$. If $p$ falls inside $C'$, then $C' \subseteq D_p$ (i.e., all the points of $C'$ are closer to $p$ than any point of $\partial C$), implying the claim.
\end{proof}

\begin{remark}
    A significantly more tedious argument shows that
    \begin{equation*}
        \Prob{ f(p) \geq \areaX{C} / 12} \geq 1/12.
    \end{equation*}
    Numerical simulations suggest that the correct statement of this type
    is $\Prob{ f(p) \geq \areaX{C} / 6} \geq 1/6$. A simple and elegant proof
    of this would be quite interesting, although for our purposes, any
    constant suffices. The insistence of having the same fraction on both
    sides of the inequality is only for vanity---any constant should
    work.
\end{remark}

\subsubsection{Many large cells}

\begin{lemma}
    \lemlab{many_large}%
    Let $P$ be a set of $n$ points picked uniformly and independently at
    random from $[0,1]^2$, where $n$ is sufficiently large. Then, there
    is a constant $\gamma$ such that in the Voronoi diagram of $P$ in
    $[0,1]^2$, there are at least $\sqrt{n}/30$ cells of area at least
    $\gamma (\log n) / n$. Furthermore, these cells are relatively
    fat---there is a radius $R = \Theta(\sqrt{(\log n)/n }) $, such that
    the disk of radius $R$ centered at the site of these large cells
    contains no other point of $P$.
\end{lemma}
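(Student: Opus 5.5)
Partition $[0,1]^2$ into a uniform grid of $m = \alpha n$ square cells with $\alpha = \beta/\log n$, taking $\beta = 4$ (ignoring rounding, e.g.\ $n = t^4$). Each grid cell $Q$ then has side length $\ell = \sqrt{\log n/(4n)}$ and area $\ell^2 = (\log n)/(4n)$. Throwing the points of $P$ is exactly throwing $n$ balls into these $m$ bins. By \lemref{one_point}, with high probability the number $X_1$ of grid cells containing exactly one point satisfies
\begin{equation*}
    X_1 \geq n^{3/4}/(4\log n).
\end{equation*}
The plan is to show that a constant fraction of these singleton cells produce large, fat Voronoi cells.

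Fix a singleton grid cell $Q$ containing the point $p$. Conditioned on $Q$ containing exactly one point, $p$ is uniform in $Q$, and all other points lie outside $Q$. Every point of $D_p$ (defined with respect to $C = Q$) is closer to $p$ than to every point of $\partial Q$. Any other site $q$ lies outside $Q$, so a segment from a point $x \in D_p \subseteq Q$ to $q$ crosses $\partial Q$ at some $z$, giving
\begin{equation*}
    \dTY{x}{q} \geq \dTY{x}{z} > \dTY{x}{p}.
\end{equation*}
Hence $D_p \subseteq$ the Voronoi cell of $p$. The torus metric is harmless here, since $Q$ is tiny and locally the metric is Euclidean.

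By \lemref{center}, with probability $\geq 1/15$ the point $p$ lands in the central square $C'$ of side $\delta = \ell/(1+2\sqrt2)$. In that case the cell of $p$ has area at least
\begin{equation*}
    \areaX{C'} = \delta^2 \geq \ell^2/15 = (\log n)/(60 n),
\end{equation*}
so we may take $\gamma = 1/60$. For fatness, if $p \in C'$ then the distance from $p$ to $\partial Q$ is at least $\sqrt2\,\delta$. Thus the disk of radius $R = \sqrt2\,\delta = \Theta(\sqrt{(\log n)/n})$ around $p$ lies in $Q$ and contains no other point of $P$.

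It remains to get many such events simultaneously. Conditioned on the set of occupied bins and their counts, the positions of points inside their bins are independent and uniform. So the indicators ``the singleton point of $Q$ lies in the center square of $Q$'' are independent Bernoulli variables with success probability $\geq 1/15$, across the $X_1$ singleton bins. A Chernoff bound then gives at least
\begin{equation*}
    X_1/30 \geq n^{3/4}/(120\log n) \geq \sqrt n/30
\end{equation*}
such cells with probability $1 - e^{-\Omega(n^{3/4}/\log n)}$, for $n$ large. Note that $X_1/30$ is half the expected count $X_1/15$. A union bound with the failure probability of \lemref{one_point} finishes the proof.

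The main obstacle is this conditioning step: making rigorous that, given the bin-occupancy configuration, the positions within bins are independent uniform, so that Chernoff applies. If that is awkward, an alternative is to treat the count of ``good singleton bins'' directly as a $2$-Lipschitz function of the $n$ point locations. Its expectation is at least $\Ex{X_1}/15$, and Azuma's inequality (\thmrefY{Azuma}{Azuma}) gives concentration just as in \lemref{one_point}.
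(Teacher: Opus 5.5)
Your proof is correct and follows essentially the same route as the paper. Both use a grid with $\Theta(n/\log n)$ cells, \lemref{one_point} to get many singleton cells, \lemref{center} to give each singleton a constant success probability, and a Chernoff bound on the number of successes. The remaining differences are minor: the paper handles rounding by choosing $t$ with $4n/\log n \le t^2 \le 8n/\log n$ (so $\beta\in[4,8]$ and $\gamma = 1/120$) instead of fixing $\beta=4$, and your explicit justification of the within-bin conditional independence, and of $D_p$ lying inside the torus Voronoi cell, is more careful than the paper's sketch.
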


\begin{proof}
    Let $t$ be the largest integer such that
    \begin{equation*}
        4\frac{n}{\log n}
        \leq
        t^2
        \leq
        8\frac{n}{\log n}.
    \end{equation*}
    We partition $[0,1]^2$ into $m=t^2$ cells using a uniform grid
    $t \times t$. Observe that $m = \beta n / \log n $, for some
    $4 \leq \beta \leq 8$. Thus, by \lemref{one_point}, there are at least
    $X_1 \geq n^{1-1/\beta}/ (4 \log n) \geq \sqrt{n}$ cells of this grid
    that contain exactly one point, and this happens with high
    probability.

    Consider such a cell $C$ with a single point $p$ in it (a
    ``singleton'' cell). By \lemref{center}, with probability
    $\geq \psi=1/15 $, the point $p$ is closer to at least
    $\psi$-fraction of this cell, than to the boundary of the cell. Thus,
    the Voronoi cell of $p$ in the Voronoi diagram of $P$, with
    probability at least $\psi$, is at least of area
    \begin{equation*}
        \xi = \psi / t^2 \geq \frac{\log n}{15 \cdot 8 n} = \frac{\log n}{120 n}.
    \end{equation*}
    Namely, the cell of this point is quite large with probability at
    least $\psi$. The probability that all such singleton cells fail to
    generate a large Voronoi cell is at most
    $(1-\psi)^{X_1} = O\bigl( \exp( -\sqrt{n }/15 )\bigr)$, which is
    absurdly small. Moreover, let $Y$ be the number of singleton cells in
    the grid that succeed in generating a cell of size $\geq \xi$. We have
    that $X_1 \geq \sqrt{n}$, and thus
    $\mu = \Ex{Y} \geq \psi X_1 \geq \sqrt{n} /15$. A standard application of
    Chernoff's inequality now implies that
    $\Prob{Y \leq \sqrt{n}/30} = \Prob{ Y \leq \mu /2 } \leq \exp( -\mu /8 ) \leq
    \exp( - \sqrt{n}/ 128 )$, which is still absurdly small. Namely, with
    high probability, there are $\Omega( \sqrt{n} )$ Voronoi cells of
    area $\geq \xi$.

    The claim about the radius is a byproduct of the above
    analysis---each successful singleton cell has a point close to the
    cell's center, and a disk of radius proportional to the sidelength of
    the cell is fully contained in the Voronoi cell of $p$, see
    \figref{p:in:square}.
\end{proof}

\section{The player's strategy}
\seclab{strategy}

In the following, we assume $n$ is sufficiently large, and the player is exposed to the random points of $P = \{ p_1, \ldots, p_n\}$ one by one.  As a reminder, the points are picked uniformly at random from the unit square $[0,1]^2$.  For any $i$, let $P_i =\{p_1, \ldots, p_i\}$ denote the prefix of the first $i$ points. The player's strategy is to ignore the first (say) $ n - f$ points, where
\begin{equation}
    f = c \sqrt{n} \log n,
    \eqlab{f:value}
\end{equation}
and $c$ is a sufficiently large constant to be determined shortly. The player then picks the first point $p_i$, in the remaining suffix of the permutation, such that the disk $\disk_i$ of radius $R = \Theta( \sqrt{(\log n)/n} ) $ (the exact value is specified by the analysis of \lemref{many_large}) centered at $p_i$, contains no other point of $P_i$. The player made its decision, and now has to hope for the best.

\begin{theorem}
    Let $P$ be a sequence of $n$ points picked uniformly and randomly from $\Hd = [0,1]^2$. The sequence is provided in an online fashion, one point at a time, and the player has to irrevocably pick one of the points in the sequence when it is provided (without knowing the future points). Then, the above strategy results in the player having a Voronoi cell of size $\Omega( \nabla ) $, where $\nabla = \tfrac{\log n}{ n}$ and the largest cell in the Voronoi diagram of $P$ (under the torus topology) has area at most $O( \nabla )$. That is, the player's strategy is $O(1)$-competitive with the optimal strategy that has full knowledge of $P$. This guarantee holds with probability $\geq 1 - O\pth{ \frac{\log^2 n}{\sqrt{n}} }$.
\end{theorem}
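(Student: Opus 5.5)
The upper bound on the prophet's payoff comes directly from \lemref{biggest_cell}. Taking (say) $c=3$ there, the largest Voronoi cell has area at most $12\nabla$ with probability $\geq 1-n^{-2}$. So the work is in the lower bound for the player. We must show three things. First, with probability $\geq 1-O(\log^2 n/\sqrt{n})$ the player actually picks a point in the suffix. Second, the picked point $p_i$ still has an empty disk of radius $\Theta(R)$ around it in the final set $P$. Third, such an empty disk gives a Voronoi cell of area $\Omega(\nabla)$, since the cell of $p_i$ contains the disk of radius $R'/2$ around $p_i$ whenever the disk of radius $R'$ around $p_i$ contains no other point of $P$.

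\emph{Step 1: some point is picked.} Apply \lemref{many_large} to the final set $P$. With high probability there are at least $\sqrt{n}/30$ sites whose disk of radius $R$ contains no other point of $P$, and hence no other point of any prefix $P_i$ containing them. Because the order of the points is a uniformly random permutation independent of their locations, each such site lies in the last $f$ positions with probability $f/n$. The probability that none of the $\geq\sqrt{n}/30$ good sites lands in the suffix is therefore at most roughly $(1-f/n)^{\sqrt{n}/30}\le \exp(-c\log n/30)$. This is polynomially small once $c$ is large. Any good site in the suffix satisfies the player's rule at its arrival time, so the player picks some point, perhaps an earlier one.

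\emph{Step 2: the picked point survives.} This is the main obstacle. The picked point $p_i$ has an empty disk $\disk_i$ of radius $R$ with respect to $P_i$, but later points may land in it. Here I would use the randomness of the future. The points $p_{i+1},\dots,p_n$ are uniform and independent of $P_i$ and of the stopping decision, since $i$ is a stopping time. At most $f$ of them remain, so the expected number landing in the disk of radius $R/2$ around $p_i$ is at most $f\pi R^2/4=O(c\sqrt{n}\log n\cdot \log n/n)=O(\log^2 n/\sqrt{n})$. By Markov's inequality, the probability that any later point enters this disk is $O(\log^2 n/\sqrt{n})$, and this is exactly the error term in the theorem. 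On the complementary event, the disk of radius $R/2$ around $p_i$ is empty of other points of $P$. Hence the Voronoi cell of $p_i$ contains the disk of radius $R/4$, whose area is $\pi R^2/16=\Omega(\nabla)$ because $R=\Theta(\sqrt{\nabla})$.

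\emph{Step 3: combine.} A union bound over the failures of \lemref{biggest_cell}, \lemref{many_large}, Step 1 and Step 2 gives total failure probability $O(\log^2 n/\sqrt{n})$. On the success event the player's cell has area $\Omega(\nabla)$ and the maximum cell has area $O(\nabla)$, so the strategy is $O(1)$-competitive. One should check that the torus metric causes no boundary issues. It does not, since all disks have radius $o(1)$ and areas are preserved under the torus topology.
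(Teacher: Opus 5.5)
Your proposal is correct and takes the paper's route: \lemref{many_large} together with the random order puts a good site among the last $f$ points with high probability, a union bound over the at most $f$ later points bounds the chance that the chosen empty disk is hit, an empty disk gives a Voronoi cell containing the half-radius disk, and \lemref{biggest_cell} supplies the $O(\nabla)$ upper bound. Two differences are cosmetic or a small gain in rigor: checking survival of the radius-$R/2$ disk instead of the radius-$R$ disk only changes constants, and your stopping-time argument that the later points are independent of the player's choice justifies a step the paper leaves implicit.
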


\begin{proof}
    By \lemref{many_large}, there are at least $U \geq c' \sqrt{n}$ sites in $P$, containing a disk of radius $R = \Theta( \sqrt{(\log n)/n} )$ centered at the site, that does not contain any other point of $P$, where $c'$ is some absolute constant. This implies that in expectation, by \Eqref{f:value}, there are at least
    \begin{equation*}
        N
        =
        \frac{U}{n}f
        \geq
        \frac{c' \sqrt{n }}{n}
        \cdot c \sqrt{n} \log n
        \geq
        c'' \log n,
    \end{equation*}
    such sites in the suffix $p_{n-f+1}, \ldots, p_n$, and one can make the leading constant $c'' $ arbitrarily large, by increasing the constant $c$ in the strategy used to define $f$. The probability that none of these $U$ large sites fall in the suffix $P \setminus P_{n-f}$ is clearly polynomially small. For those of little faith, this probability is
    \begin{align*}
        \eta
        &=
        \frac{n-U}{n}\cdot \frac{n-U-1}{n-1} \cdots \frac{n-U-f+1}{n-f+1}
        \leq
        \pth{\frac{n-U}{n}}^f
        \\&
        \leq
        \exp \pth{ - \frac{U f}{n}}
        =
        \exp \pth{ - N}
        \\&
        =
        \frac{1}{n^{c''}},
    \end{align*}
    as one imagines picking the last $f$ points first, avoiding the $U$ ``forbidden'' large sites. As we can make this bad probability arbitrarily small, this means that with high probability, a large cell would be encountered in the last $f$ sites.

    All the player can do is to check that the disk $D_i$ is empty of points of $P_{i-1}$. By the above, the player would encounter such a disk in the last $f$ sites, with high probability. Sadly, if somewhat unlikely, one of the final ($n-i \leq f$) remaining sites might still hit $D_i$. Being pessimistic, this would be considered a failure. The probability for that, using the union bound, is
    \begin{equation*}
        \areaX{D_i} f
        =%
        O\pth{ \frac{\log n}{n} \sqrt{n } \log n }
        =%
        O\pth{ \frac{\log^2 n}{\sqrt{n}} }.
    \end{equation*}

    If the disk $D_i$ the player chose is not hit, then at least a
    quarter of its area belongs to the final Voronoi cell of
    $p_i$. Indeed, this cell is at least the Voronoi cell of $p_i$ vs all
    the boundary points of $D_i$.
\end{proof}

\begin{remark}
    One can improve the probability of failure by observing that two sites have to hit $D_i$ for the area the player gets to be catastrophically smaller. Indeed, a single site hitting $D_i$ can ``steal'' at most half the area of the Voronoi cell assigned to $p_i$ before it got hit.
\end{remark}

\section{Conclusions}
\seclab{conclusions}

We propose a new geometric variant of the secretary problem where the
payoff is a function computed at the end of the game---specifically,
the area of the Voronoi cell the player picked. Unlike the secretary
problem, the payoff is not guaranteed at the time of selection and is
subject to deterioration due to interference by later elements appearing
in the sequence. We are unaware of previous work on such variants, and we
think they are both interesting and natural.

Our result should extend naturally to $d >2$ in a straightforward fashion, as we used surprisingly few geometric properties of Voronoi diagrams. Our analysis relied on showing that many sites are far from other sites and thus have large cells, enabling the player to wait for almost the end of the sequence before making their choice.

There are many natural problems for further research---different interference models, what if the sites are predetermined, but their order is random. What if the player can make $k$ choices (and keep all of them, or only the best one), etc.

\paragraph*{Acknowledgments.}
The author thanks Farouk Harb for his useful comments on the manuscript.

\printbibliography

\end{document}